\documentclass[aps,preprintnumbers,nofootinbib]{revtex4}

\usepackage{epsfig,graphics}
\usepackage{amsfonts,amssymb,amsmath} 
\usepackage{amsthm}

\newcommand{\comment}[1]{}
\newcommand{\ket}[1]{\left |  #1 \right\rangle}

\bibliographystyle{apsrev}

\theoremstyle{plain}
\newtheorem{theorem}{Theorem}
\newtheorem{lemma}{Lemma}

\theoremstyle{definition}
\newtheorem{definition}{Definition}

\setlength{\textwidth}{6.8in} 

\begin{document}

\title{Hidden variable models for quantum theory cannot have any local part}

\author{Roger Colbeck and Renato Renner \\[-2ex] \mbox{}}

\affiliation{Institute for Theoretical Physics \\ ETH Zurich, Switzerland \\ \vspace{-1.5ex} \\ {\tt colbeck@phys.ethz.ch $\quad$ renner@phys.ethz.ch}}

\date[]{}

\begin{abstract}
  It was shown by Bell that no local hidden variable model is
  compatible with quantum mechanics.  If, instead, one permits the
  hidden variables to be entirely non-local, then any quantum
  mechanical predictions can be recovered.  In this paper, we consider
  general hidden variable models which can have both local and
  non-local parts.  We then show the existence of (experimentally
  verifiable) quantum correlations that are incompatible with any
  hidden variable model having a non-trivial local part, such as the
  model proposed by Leggett.
\end{abstract}

\maketitle

\section{Introduction}

\label{sec:HVMs}
Consider a source emitting two particles, which travel to two
detectors, located far apart.  The detectors are controlled by Alice
and Bob.  We denote Alice's choice of measurement by $A$, and
similarly Bob's by $B$.  The measurement devices generate the outcomes
$X$ and $Y$ on Alice's and Bob's sides, respectively.
\comment{This setup is shown in Figure \ref{fig:setup}.}

In a hidden variable model, one attempts to describe the outcomes of
such measurements by assuming that there are hidden random variables,
in the following denoted by $U$, $V$, and $W$, distributed according
to some joint probability distribution $P_{U V W}$. (For reasons to be
clarified below, we consider three different hidden variables.)
Measurement outcomes then only depend on Alice and Bob's choice of
measurement $A$ and $B$ as well as the values of the hidden variables
$U,V, W$, that is, formally
\begin{align*}
X & = f(A,B,U,V,W) \\
Y & = g(A,B,U,V,W) 
\end{align*}
for some functions $f$ and $g$. Rephrased in the language of
conditional probability distributions, these conditions read
\begin{align*}
P_{X|A=a,B=b,U=u,V=v,W=w}(x) & = \delta_{x, f(a,b,u,v,w)} \\
P_{Y|A=a,B=b,U=u,V=v,W=w}(y) & = \delta_{y, g(a,b,u,v,w)} \ .
\end{align*}

In this work, we divide the hidden variables into local and non-local
parts:\footnote{\label{ft:dist}Notice that our definition of
 \emph{local} and \emph{non-local} parts is not the same as that used
 in~\cite{BKP}. While ours is based on a distinction between local
 and non-local hidden variables, the definition in~\cite{BKP} relies
 on a convex decomposition of the conditional probability
 distribution into a local conditional distribution and a non-local
 one.} $U$ and $V$ are, respectively, Alice's and Bob's \emph{local}
hidden variables, and $W$ is a \emph{non-local} hidden variable. The
requirement is that, when the non-local part $W$ is ignored, Alice's
distribution depends only on the local parameters $A, U$ and Bob's
only on $B, V$,
\begin{align}  \label{eqn:loc1}
\sum_w P_W(w)P_{X|A=a,B=b,U=u,V=v,W=w} & \equiv P_{X|A=a,B=b,U=u,V=v} \equiv P_{X|A=a,U=u}
\\
\label{eqn:loc2}
\sum_w P_W(w)P_{Y|A=a,B=b,U=u,V=v,W=w} & \equiv P_{Y|A=a,B=b,U=u,V=v} \equiv P_{Y|B=b,V=v}.
\end{align}

We stress here that identities~\eqref{eqn:loc1} and~\eqref{eqn:loc2}
do not restrict the generality of the model; they are merely a
definition of what we call \emph{local}. In fact, any possible
dependence of the individual measurement outcomes $X$ and $Y$ on the
choice of measurements $A$ and $B$|in particular, the predictions of
quantum theory|can be recreated by an appropriate choice of functions
$f$ and $g$ that depend on the non-local variable $W$ but not on the
local variables $U$ and $V$. In the following, we call such a model
\emph{entirely non-local}. The \emph{de Broglie-Bohm theory} (see,
e.g., \cite{Bell}) is an example of such a model.

In the \emph{Bell model} \cite{Bell}, one makes the assumption that
the individual measurement outcomes are fully determined by local
parameters, i.e., that the functions $f$ and $g$ only depend on the
local variables $U$ and $V$, respectively, but not on $W$. It is well
known that such an assumption is inconsistent with quantum theory.
Modulo a few loopholes (see for example
\cite{Aspect,BCHKP,Kent_loophole} for discussions), experiment agrees
with the predictions of quantum mechanics, and hence falsifies Bell's
model.

\emph{Leggett} \cite{leggett} has introduced a hidden variable model
for which the hidden variables have both a local and a global part as
above.  In addition, he assumes that the expectation values of the
measurement outcomes obey a specific law (Malus' law), which depends
only on local quantities. More concretely, the assumption is that the
measurement outcomes $X$ and $Y$ are binary values and that the
measurement choices $A = \vec{A}$ and $B = \vec{B}$ as well as the
local hidden variables $U = \vec{U}$ and $V = \vec{V}$ are unit
vectors. The conditional probability distributions $P_{X|A=a,U=u}$ and
$P_{Y|B=b,V=v}$ on the r.h.s.\ of~\eqref{eqn:loc1}
and~\eqref{eqn:loc2} are given by
$[|\vec{A}\cdot\vec{U}|^2,1-|\vec{A}\cdot\vec{U}|^2]$ and
$[|\vec{B}\cdot\vec{V}|^2,1-|\vec{B}\cdot\vec{V}|^2]$, respectively.
Such a model is inconsistent with quantum theory, and has motivated
recent experiments \cite{GPKBZAZ,BLGKLS,PFGJZAZ}.

In this paper, we show that there exist quantum correlations for which
\emph{all} hidden variable models with a non-trivial local part are
inconsistent.  More precisely, we show that there is a Bell-type
experiment with binary outcomes\footnote{Note that the labeling of the
outcomes $X$ and $Y$ is irrelevant for the argument. For
concreteness, one might think of $X \in \{-1, 1\}$ or $X \in
\{0,1\}$.}  $X$ and $Y$ such that $P_{X|A=a,U=u}=\mathcal{U}_X$ for
all $a$ and $u$, and $P_{Y|B=b,V=v}=\mathcal{U}_Y$ for all $b$ and $v$
are the only distributions compatible with quantum mechanics, where
$\mathcal{U}_X$ and $\mathcal{U}_Y$ denote the uniform distributions
on $X$ and $Y$, respectively.  In particular, $X$ and $Y$ are
independent of the local hidden variables $U$ and $V$. Thus, the only
hidden variable model compatible with quantum mechanical predictions
is entirely non-local. This is in agreement with a similar result
obtained independently by Branciard \emph{et al.}~\cite{BBGKLLS},
where it is shown that Leggett-type inequalities have no local part.

\section{Definitions and Useful Lemmas}

Our technical theorem will rely on the notion of non-signaling
distributions. Intuitively, a conditional distribution $P_{X Y | A B}$
is non-signaling if the behavior on Bob's side, specified by $B$ and
$Y$, cannot be influenced by Alice's choice of $A$, and vice versa. We
give a general definition for $n$ parties.

\begin{definition}
An $n$ party conditional probability distribution $P_{X_1,\ldots
 ,X_n|A_1,\ldots ,A_n}$ is {\it non-signaling} if, for all subsets
$S\subseteq\{1,\ldots,n\}$, we have
\begin{equation*}
P_{X_{S_1},\ldots ,X_{S_{|S|}}|A_1,\ldots
 ,A_n}=P_{X_{S_1},\ldots ,X_{S_{|S|}}|A_{S_1},\ldots ,A_{S_{|S|}}}.
\end{equation*}
\end{definition}

In the following, we denote by $D(P_X,Q_X)$ the \emph{statistical
distance} between two probability distributions $P_X$ and $Q_X$,
defined by $D(P_X,Q_X) = \frac{1}{2}\sum_x|P_X(x)-Q_X(x)|$. It is easy
to verify that 
\begin{equation} \label{eq:distmax}
D(P_X, Q_X) = \sum_{x} \max[ 0, Q_X(x) - P_X(x) ] \ .
\end{equation}
Furthermore, taking marginals cannot increase the statistical
distance, i.e.,
\begin{equation}  \label{eq:dist_dec}
D(P_X,Q_X) \leq D(P_{X Z},Q_{X Z}) \ ,
\end{equation}
where $P_X$ and $Q_X$ are the marginals of joint distributions $P_{X
Z}$ and $Q_{X Z}$, respectively.  In fact, if the marginals $P_Z$
and $Q_Z$ are equal, then the distance $D(P_{X Z},Q_{X Z})$ can be
written as the expectation of the distance between the
corresponding conditional probability distributions,
\begin{equation} \label{eq:averagedist}
D(P_{X Z},Q_{X Z})
=
\sum_{z} P_Z(z) D(P_{X | Z=z} Q_{X | Z=z}) \ .
\end{equation}

Finally, we will use the following lemma which relates the statistical
distance to the probability that two random variables take the same
value.

\begin{lemma}
\label{lem:dist}
Given a joint probability distribution $P_{X Y}$, the distance
between the marginals $P_X$ and $P_Y$ is upper bounded by the
probability that $X \neq Y$, that is, 
$D(P_X,P_Y)\leq\sum_{x \neq y} P_{X Y}(x,y)$.
\end{lemma}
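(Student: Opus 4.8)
The plan is to prove the bound $D(P_X,P_Y)\le\sum_{x\neq y}P_{XY}(x,y)$ by exploiting the characterization of statistical distance given in~\eqref{eq:distmax}, namely $D(P_X,P_Y)=\sum_x\max[0,P_Y(x)-P_X(x)]$. The idea is that the only way $P_Y$ can exceed $P_X$ at a point $x$ is through joint events where $Y=x$ but $X\neq x$, since the events where $X=Y=x$ contribute equally to both marginals and cancel in the difference.

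First I would write each marginal as a sum over the joint distribution: $P_X(x)=\sum_{x'}P_{XY}(x,x')$ and $P_Y(x)=\sum_{x'}P_{XY}(x',x)$. I would then split off the diagonal term $P_{XY}(x,x)$ from each sum, so that the difference $P_Y(x)-P_X(x)$ becomes $\sum_{x'\neq x}P_{XY}(x',x)-\sum_{x'\neq x}P_{XY}(x,x')$, with the diagonal contribution cancelling. The crucial observation is then that dropping the (non-negative) subtracted term can only increase the value, so
\begin{equation*}
P_Y(x)-P_X(x)\le\sum_{x'\neq x}P_{XY}(x',x).
\end{equation*}
Since the right-hand side is non-negative, this inequality also bounds $\max[0,P_Y(x)-P_X(x)]$.

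Summing this over all $x$ and invoking the identity~\eqref{eq:distmax} gives $D(P_X,P_Y)\le\sum_x\sum_{x'\neq x}P_{XY}(x',x)$, and relabeling the summation indices shows the double sum is exactly $\sum_{x\neq y}P_{XY}(x,y)$, which is the probability that $X\neq Y$. I do not anticipate a serious obstacle here; the argument is essentially a bookkeeping exercise once one uses the asymmetric formula~\eqref{eq:distmax} rather than the symmetric definition with absolute values. The only point requiring a little care is choosing the correct direction of the one-sided formula so that the diagonal terms cancel cleanly and the leftover terms are manifestly of the desired form; using the symmetric definition directly would force one to handle the absolute value and would obscure the cancellation.
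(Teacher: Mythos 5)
Your proof is correct, but it takes a genuinely different route from the paper's. You apply the one-sided formula~\eqref{eq:distmax} directly to the marginals and verify the bound by explicit bookkeeping: after the diagonal term $P_{XY}(x,x)$ cancels, $P_Y(x)-P_X(x)$ is a difference of two off-diagonal sums, you drop the non-negative subtracted one, and sum over $x$. The paper instead packages the same diagonal cancellation into a coupling-style argument: it introduces a copy $X'$ of $X$, so that $P_{XX'}$ is supported entirely on the diagonal, invokes the monotonicity of statistical distance under taking marginals~\eqref{eq:dist_dec} to obtain $D(P_X,P_Y)\leq D(P_{XX'},P_{XY})$, and then evaluates the right-hand side with~\eqref{eq:distmax}: off-diagonal entries of $P_{XX'}$ vanish, contributing exactly $\sum_{x\neq y}P_{XY}(x,y)$, while diagonal entries contribute nothing since $P_{XY}(x,x)\leq P_X(x)=P_{XX'}(x,x)$. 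Your version is more elementary and self-contained, using only~\eqref{eq:distmax} and no auxiliary random variable, whereas the paper's version is shorter on the page and makes visible that the lemma is an instance of the standard coupling inequality $D(P_X,P_Y)\leq\Pr[X\neq Y]$, reusing machinery~\eqref{eq:dist_dec} already established for the main theorem. The only implicit point in your argument---that $P_X$ and $P_Y$ are regarded as distributions over a common alphabet so that the pointwise difference makes sense---is harmless, since the comparison $x\neq y$ in the statement already presupposes it.
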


\begin{proof}
Define $X'$ as a copy of $X$, so that
$P_{X X'}(x,y)=0$ for all $x \neq y$.
Using~\eqref{eq:dist_dec} and~\eqref{eq:distmax}, we
have
\begin{equation}
D(P_X,P_Y)\leq D(P_{X X'},P_{X Y})=\sum_{x \neq y} P_{X Y}(x,y).
\end{equation}
\end{proof}

\section{Chained Bell Inequalities}
\label{sec:measurements}

We use the family of Bell inequalities introduced by Pearle
\cite{pearle} and Braunstein and Caves \cite{BC}.  Each member of this
family is indexed by $N \in \mathbb{N}$, the number of measurement
choices.  Alice can choose the measurements $A\in\{0,2,\ldots,2N-2\}$
and Bob $B\in\{1,3,\ldots,2N-1\}$.  Each measurement has two outcomes,
i.e., $X$ and $Y$ are binary.  If $x$ is one outcome, $\bar{x}$
denotes the other. 

The quantity we consider is
\begin{equation} \label{eq:Idef}
I_N \equiv I_N(P_{X Y | A B}) := \sum_{\genfrac{}{}{0pt}{}{a,b}{|a-b|=1}}\sum_x P_{X Y|A=a, B=b}(x,\bar{x})+\sum_x P_{X Y|A=0, B=2 N-1}(x,x) \ .
\end{equation}
Note that, for any fixed $a,b$, the sum $\sum_x P_{X Y|A=a,
 B=b}(x,\bar{x})$ corresponds to the probability that the values $X$
and $Y$ are distinct.  It is easy to verify (but we are not going to
use this fact) that all classical correlations satisfy $I_N\geq 1$
(i.e., $I_N \geq 1$ is a Bell inequality), and that the CHSH
inequality~\cite{CHSH} is the $N=2$ version. We also emphasize that
the bound $I_N \geq 1$ is independent of the actual measurements
chosen and hence allows a device-independent falsification of hidden
variable models (in contrast to Leggett-type inequalities).

Using a quantum mechanical setup, one can obtain a value of $I_N$,
denoted $I_N^{\text{QM}}$, which is arbitrarily small in the large $N$
limit.  To see this, suppose Alice and Bob share the state
${1}/{\sqrt{2}}\left(\ket{00}+\ket{11}\right)$, and their
measurements take the form of projections onto the states
$\cos\frac{\theta_i}{2}\ket{0}+\sin\frac{\theta_i}{2}\ket{1}$ and
$\sin\frac{\theta_i}{2}\ket{0}-\cos\frac{\theta_i}{2}\ket{1}$, where
$\theta_i=\frac{i\pi}{2 N}$ (Alice's measurements take $i=a$, and
Bob's take $i=b$).  Using this setup, the probability that Alice and
Bob's measurement outcomes $X$ and $Y$ are distinct, for $|a-b|=1$, is given by
\[
\sum_x P_{X Y|A=a, B=b}(x,\bar{x})=\sin^2\frac{\pi}{4 N}
\]
and, likewise, the probability that the outcomes are equal for $a=0$ and $b=2N-1$ is
\[
\sum_x P_{X Y|A=0, B=2 N-1}(x,x)=\sin^2\frac{\pi}{4 N} \ .
\]
Thus, quantum mechanics predicts
\begin{align} \label{eq:INQM}
I_N^{\text{QM}}=2 N\sin^2\frac{\pi}{4N} \ ,
\end{align}
which, in the limit of large $N$, is approximated by
$\frac{\pi^2}{8 N}$ and can be made arbitrarily small.

\section{Technical Result}

Our argument is based on a straightforward extension of a result about
non-signaling distributions $P_{X Y | A B}$ by Barrett, Kent, and
Pironio~\cite{BKP}. The main difference between their result and our
Theorem~\ref{thm:main} is that our statement holds with respect to an
additional third party with an input $C$ and output $Z$. (When
applying the theorem, the local hidden variables will take the place
of $Z$, whereas $C$ is not used.)

For the following, let $X$ and $Y$ be binary, $A \in \{0, 2, \ldots,
2 N-2\}$, $B \in \{1, 3, \ldots, 2 N-1\}$ for some $N \in \mathbb{N}$, as
in Section~\ref{sec:measurements}, and let $Z$ and $C$ be
arbitrary.

\begin{theorem}
\label{thm:main}
If $P_{X Y Z | A B C}$ is non-signaling then, for any $C$ chosen
independently of the inputs $A$ and $B$,\footnote{Because $C$ is
chosen independently of $A$ and $B$, the joint distribution $P_{X Y
  Z C | A B}$ is given by $P_{X Y Z C | A=a, B=b}(x,y,z,c) = P_{X Y
  Z | A=a, B=b, C=c}(x,y,z) P_C(c)$.}
\begin{equation*}
D(P_{X Z C|A=a},\mathcal{U}_X\times
P_{Z C})\leq\frac{I_N}{2}\qquad\text{and}\qquad
D(P_{Y Z C|B=b},\mathcal{U}_Y\times P_{Z C})\leq\frac{I_N}{2}
\end{equation*}
for all $a,b$, where $I_N \equiv I_N(P_{X
Y | A B})$, and where $\mathcal{U}_X$ and $\mathcal{U}_Y$
denote the uniform distributions on $X$ and $Y$, respectively.
\end{theorem}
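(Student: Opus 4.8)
The plan is to prove both inequalities with a single chaining argument over a cyclic sequence of $2N$ distributions, one per measurement, using non-signaling to make the links well-defined. I first record two elementary facts. For a binary variable $X$ together with side information $W$, writing $\overline{R}$ for the distribution obtained from $R = P_{X W}$ by flipping the value of $X$, a direct computation gives $D(P_{X W}, \mathcal{U}_X \times P_W) = \frac{1}{2} D(P_{X W}, \overline{P_{X W}})$. Second, applying Lemma~\ref{lem:dist} to the pair $(X,W)$ and $(Y,W)$ that share the \emph{same} $W$ gives $D(P_{X W}, P_{Y W}) \leq \Pr[X \neq Y]$, since the shared $W$ drops out of the event $\{(X,W)\neq(Y,W)\}$; applied instead to $(X,W)$ and $(\overline{Y},W)$ it gives $D(P_{X W}, \overline{P_{Y W}}) \leq \Pr[X = Y]$. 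Throughout I take $W = (Z,C)$.

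Next I use non-signaling to define, for each index $i \in \{0,\ldots,2N-1\}$, a distribution $R_i$ on $(\text{outcome},Z,C)$: for even $i$ set $R_i = P_{X Z C|A=i}$ and for odd $i$ set $R_i = P_{Y Z C|B=i}$. Because $C$ is chosen independently of the inputs and the third party alone cannot be signaled to, every $R_i$ has the same $(Z,C)$-marginal $P_{Z C}$, so by the first fact the two target distances equal $\frac{1}{2} D(R_a, \overline{R_a})$ and $\frac{1}{2} D(R_b, \overline{R_b})$. For any adjacent pair $(i,i+1)$ the inputs differ by one, hence can be jointly measured; non-signaling identifies the two single-party marginals of the $(Z,C)$-extended joint distribution with $R_i$ and $R_{i+1}$, and the second fact bounds $D(R_i,R_{i+1})$ by the corresponding $X\neq Y$ contribution to $I_N$. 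For the wraparound pair $(A=0,B=2N-1)$ the flipped version of the second fact bounds $D(R_{2N-1},\overline{R_0})$ by the $X=Y$ contribution.

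Finally I close the cycle. Starting from any $R_a$, I apply the triangle inequality around the loop $R_a \to \cdots \to R_{2N-1} \to \overline{R_0} \to \cdots \to \overline{R_a}$, using that flipping both arguments preserves the statistical distance. Each forward link contributes one $X\neq Y$ term, the single wraparound link contributes the $X=Y$ term, and the orientation flip is incurred exactly once, so the bound telescopes to $D(R_a, \overline{R_a}) \leq I_N$. Combined with the first fact this gives $D(P_{X Z C|A=a}, \mathcal{U}_X \times P_{Z C}) \leq I_N/2$, and the identical argument begun at an odd index yields the bound for Bob.

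I expect the main obstacle to be the joint handling of the side information and the orientation flips. One must verify, via non-signaling, that each $R_i$ is unambiguous (independent of which partner measurement is used) and shares the common marginal $P_{Z C}$, and then check that one traversal of the cyclic chain picks up the flip exactly once regardless of the starting index. This bookkeeping is precisely what upgrades the Barrett--Kent--Pironio marginal statement to the joint statement involving the third party.
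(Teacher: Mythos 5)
Your proof is correct, and its skeleton coincides with the paper's: both run the Pearle--Braunstein--Caves chain, bound each link with Lemma~\ref{lem:dist}, close the loop by the triangle inequality so that the orientation flip coming from the $(A=0,\,B=2N-1)$ term is incurred exactly once, giving $D(R,\overline{R})\leq I_N$, and then convert to distance from uniform via the factor-$2$ identity for a flipped binary variable. Where you genuinely diverge is in the treatment of the third party's data. The paper conditions on fixed $(z,c)$, runs the two-party Barrett--Kent--Pironio argument verbatim to obtain $D(P_{X|A=0,C=c,Z=z},\mathcal{U}_X)\leq\tfrac{1}{2}I_N(P_{XY|AB,C=c,Z=z})$, and only afterwards averages over $P_{ZC}$ using the identity~\eqref{eq:averagedist}, which additionally requires (implicitly, via non-signaling and the independence of $C$) that $P_{ZC|A=a,B=b}=P_{ZC}$ so that the average of the conditional $I_N$'s reproduces the unconditional $I_N$. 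You instead carry $W=(Z,C)$ jointly through the chain, applying Lemma~\ref{lem:dist} to the $W$-extended pairs so that each link is bounded directly by the corresponding unconditional term of $I_N$; the price is the bookkeeping you correctly flag --- verifying via non-signaling that each $R_i$ is unambiguous and that $R_a$ has $(Z,C)$-marginal equal to the global $P_{ZC}$ (without which your first fact would not identify the target distance) --- and the reward is that you never need~\eqref{eq:averagedist} nor the averaging of conditional $I_N$'s. The two routes are equivalent in strength; yours keeps the argument at the level of the joint distributions appearing in the statement of Theorem~\ref{thm:main} and makes explicit exactly where non-signaling and the independence of $C$ are used, while the paper's condition-then-average organization reduces the three-party claim to a pointwise application of the known two-party bound, which yields the shorter write-up.
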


\begin{proof}
Using Lemma \ref{lem:dist} and the triangle inequality, we have for
any fixed $z$ and $c$
\begin{align}
I_N(P_{X  Y | A B, C=c, Z=z})
&=
\sum_{\genfrac{}{}{0pt}{}{a,b}{|a-b|=1}} \sum_x P_{X Y|A=a, B=b, C=c, Z=z}(x,\bar{x})+\sum_x P_{X Y|A=0,B=2 N-1, C=c, Z=z}(x,x) \nonumber\\
&\geq \sum_{\genfrac{}{}{0pt}{}{a,b}{|a-b|=1}}D(P_{X|A=a, C=c, Z=z},P_{Y|B=b, C=c, Z=z})+D(P_{\bar{X}|A=0,C=c, Z=z},P_{Y|B=2 N-1,C=c, Z=z}) \label{eqn:dist}\\
&\geq D(P_{\bar{X}|A=0,C=c,Z=z},P_{X|A=0,C=c,Z=z}) \ .\nonumber
\end{align}
Then, since
\begin{eqnarray}
D(P_{\bar{X}|A=0,C=c,Z=z},P_{X|A=0,C=c,Z=z})=2 D(P_{X|A=0, C=c, Z=z},\mathcal{U}_X),
\end{eqnarray}
we obtain
\begin{equation}
D(P_{X|A=0, C=c, Z=z}, \mathcal{U}_X)  \leq \frac{1}{2} I_N(P_{X  Y | A B, C=c, Z=z}) \ .
\end{equation}
Taking the average over $z$ and $c$ (distributed according to $P_{Z C}
\equiv P_{Z | C} P_C$) on both sides of this inequality and
using~\eqref{eq:averagedist} we conclude
\[
D(P_{X Z C|A=0}, \mathcal{U}_X \times P_{Z C})  \leq \frac{I_N}{2} \ .
\]
The claim for arbitrary $a$ (rather than $a=0$) as well as the second
inequality of the theorem follow by symmetry.
\end{proof}

For our argument, we apply the theorem to the setup described in
Section~\ref{sec:measurements}, with $Z := (U,V)$ and $C$ equal to a
constant (i.e., $C$ is not used). Under the assumption that the hidden
variables $U$ and $V$ are independent of the inputs $a$ and
$b$,\footnote{This assumption simply says that, in an experiment, the
 choice of measurements $a$ and $b$ must not depend on the value of
 the local hidden variables. Of course, this is the case if the
 measurements are chosen at random.} we have $P_{Z|A=a, B=b} \equiv
P_Z$. This together with~\eqref{eqn:loc1} and~\eqref{eqn:loc2} implies
the non-signaling condition. Theorem~\ref{thm:main} thus gives
\begin{equation} \label{eq:thmconclusion}
D(P_{X U|A=a}, \mathcal{U}_X \times P_U) \leq
\frac{I_N}{2} \qquad \text{and} \qquad D(P_{Y V|B=b}, \mathcal{U}_Y \times P_V) \leq
\frac{I_N}{2}
\end{equation} 
for all $a$ and $b$. In particular, for $I_N \ll 1$, the bound implies
that the measurement outcomes $X$ and $Y$ are virtually independent of
the local hidden variables $U$ and $V$.

\section{Implications}

Before summarizing the implications of Theorem~\ref{thm:main}, we
first stress that the contribution of this work is not a technical
one.  Our aim is to establish a connection between an argument
proposed in~\cite{BKP} and recent work on hidden variable models, in
particular Leggett-type
models~\cite{leggett,GPKBZAZ,BLGKLS,PFGJZAZ,BBGKLLS}.

Suppose an experiment is performed, using the setup described in
Section~\ref{sec:measurements}, which allows us to estimate an upper
bound $I_N^*$ on the quantity $I_N \equiv I_N(P_{X Y|A B})$ defined
by~\eqref{eq:Idef}. Then, according to~\eqref{eq:thmconclusion}, the
\emph{maximum locality} of $X$, which we measure in terms of its
dependence on the local hidden variable $U$ via $D(P_{X U|A=a},
\mathcal{U}_X \times P_U)$, is bounded by ${I_N^*}/{2}$.

For example, after many (noiseless) measurements of the CHSH quantity,
$I_4$, one would eventually get an upper bound $I_4^*$ close to
$I_4^{\text{QM}}=2-\sqrt{2}$ (see Eqn.~\eqref{eq:INQM}). Hence, the
maximum locality of a hidden variable theory compatible with these
measurements is $1-1/\sqrt{2} \approx 0.3$. This bound can be brought
closer to zero by performing experiments according to the setup
described in Section~\ref{sec:measurements} with larger
$N$.\footnote{For any given practical setup, the optimal value of $N$
 which minimizes the upper bound $I_N^*$ may depend on the specific
 noise model of the measurement devices.}  Such experiments were
proposed in~\cite{BKP}.

In the limit of large $N$, quantum mechanics predicts
$I_{\infty}^{\text{QM}}=0$.  Hence, for any hidden variable model to
describe these quantum correlations, we require $P_{X
U|A=a}=\mathcal{U}_X \times P_U$, and $P_{Y V|B=b}=\mathcal{U}_Y
\times P_V$. Consequently, the outcomes $X$ and $Y$ for any fixed pair
of measurements $(a,b)$ are fully independent of the local hidden
variables $U$ and $V$. Notice that, we can reach this conclusion using
only measurements in \emph{one} plane of the Bloch sphere on each side
(where Alice's plane contains $a$ and Bob's $b$).

Finally, we discuss the implications for Leggett's model. Using our
inequality~\eqref{eq:thmconclusion} with $N$ measurements in one plane
of the Bloch sphere we conclude in the limit of large $N$ that the
model can only be consistent with the predictions of quantum mechanics
if $\vec{U}$ and $\vec{V}$ are almost orthogonal to the measurement
plane. Hence, with measurements in only one plane, we can establish
that the local hidden variables $\vec{U}$ and $\vec{V}$ play no
r\^ole. A further advantage of the inequality~ we use over those of
the Leggett-type is that our inequalities enable a device independent
falsification of any hidden variable model with non-trivial local
part. Conversely, with the usual Leggett-type inequalities, the bound
depends on the setup, and is hence less experimentally robust.

\end{document}